\documentclass[hidelinks, 12pt]{article}
\emergencystretch=15pt  

\usepackage[mathscr]{eucal}
\usepackage{amssymb,latexsym,amsthm,pifont,graphicx,anysize,multicol}
\usepackage{verbatim}
\usepackage{graphicx}
\usepackage{color}
\usepackage{amsmath}
\usepackage{amsthm}
\usepackage{enumerate}
\usepackage{framed}
\usepackage{authblk}
\usepackage{setspace}
\usepackage{hyperref}
\usepackage{cancel}

\usepackage{bm}
\usepackage{tocvsec2}
\usepackage{comment} 
\usepackage{setspace}
\usepackage{hyperref}
\usepackage{epigraph}
\usepackage[margin=1.5cm]{caption}

\setlength{\textwidth}{6in}
\setlength{\oddsidemargin}{.15in}
\setlength{\evensidemargin}{.15in}
\setlength{\textheight}{9in}

\definecolor{dgray}{RGB}{90,90,90}
\definecolor{gray}{RGB}{120,120,120}
\definecolor{lgray}{RGB}{150,150,150}
\definecolor{purple}{RGB}{150,0,150}
\definecolor{magenta}{RGB}{250,0,150}

\numberwithin{equation}{section}

\newtheorem{thm}{Theorem}[section]
\newtheorem{lem}[thm]{Lemma}

\newtheorem{conj}[thm]{Conjecture}


%



\renewcommand\S{\Sigma}
\newcommand\s{\sigma}

\newcommand\D{\nabla}
\newcommand\e{\epsilon}
\renewcommand\b{\beta}

\newcommand\g{\gamma}

\renewcommand\a{\alpha}

\newcommand{\field}[1]{\mathbb{#1}}

\DeclareFontFamily{OT1}{rsfs}{}
\DeclareFontShape{OT1}{rsfs}{m}{n}{ <-7> rsfs5 <7-10> rsfs7 <10->
rsfs10}{} \DeclareMathAlphabet{\mycal}{OT1}{rsfs}{m}{n}

\newcommand\beq{\begin{equation}}
\newcommand\eeq{\end{equation}}
\newcommand\ben{\begin{enumerate}}
\newcommand\een{\end{enumerate}}
\newcommand\bit{\begin{itemize}}
\newcommand\eit{\end{itemize}}

\newcounter{mnotecount}[section]

\setcounter{equation}{0}

\title{Rigidity in vacuum under conformal symmetry}

\author[*]{Gregory J. Galloway}
\author[$\dag$]{Carlos Vega}

\affil[*]{\small Department of Mathematics, 

University of Miami, Coral Gables, FL }

\affil[$\dag$]{Department of Mathematics,

Binghamton University SUNY, Binghamton, NY}

\begin{document}
\date{}

\newpage
\thispagestyle{empty}
\maketitle

\begin{abstract} 
Moitvated in part by \cite{IsenbergMoncrief}, in this note we obtain a rigidity result for globally hyperbolic vacuum spacetimes in arbitrary dimension that admit a timelike conformal Killing vector field.  Specifically, we show that if $M$ is a Ricci flat, timelike geodesically complete spacetime with compact Cauchy surfaces that admits a timelike conformal Killing field $X$, then $M$ must split as a metric product, and $X$ must be Killing. This gives a partial proof of the Bartnik splitting conjecture in the vacuum setting.  
\end{abstract}

\section{Introduction} 

The classical Hawking-Penrose singularity theorems form a cornerstone in the global theory of spacetime geometry and general relativity.  These theorems guarantee the existence of incomplete causal geodesics, (i.e., `singularities'), in large, generic classes of spacetimes satisfying natural energy conditions. 

The singularity theorems can be viewed as Lorentzian analogs to Riemannian Ricci comparison theorems like Myers' Theorem, and rely on \emph{strict} curvature conditions. In the early 1980's, S.-T. Yau put forth the question of the \emph{rigidity} of the singularity theorems, and posed a Lorentzian analog to the Cheeger-Gromoll splitting theorem in his famous problem section, \cite{Yau}, in 1982. This was settled in a series of papers by the end of the decade, \cite{Esch}, \cite{GalLST}, \cite{NewmanLST}, with the basic version of the result (due to Eschenburg) as follows:

\begin{thm} [Lorentzian Splitting Theorem] \label{LST}
Let $M$ be a globally hyperbolic, timelike geodesically complete spacetime, satisfying the timelike convergence condition, $\mathrm{Ric}(X,X) \ge 0$, for all timelike $X$. If $M$ admits a timelike line, then $M$ splits as an isometric product
\beq
(M^{n+1}, g) \approx (\field{R} \times \S^n, -dt^2 + h)\label{splitspacetime}
\eeq
where $\S^n$ is a smooth, geodesically complete, spacelike (Cauchy) hypersurface, with induced metric $h$.
\end{thm}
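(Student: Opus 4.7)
The plan is to construct Busemann functions associated to the future and past halves of the line, use the Raychaudhuri equation and the timelike convergence condition to show they satisfy a one-sided Laplacian inequality in a support sense, apply a Lorentzian strong maximum principle to promote them to smooth solutions of the eikonal equation with vanishing Hessian, and then integrate the resulting parallel timelike vector field to recover the metric product.

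Let $\gamma: \mathbb{R} \to M$ be the timelike line, parametrized by arclength, and define the future and past Busemann functions
\[
b^+(x) = \lim_{t \to \infty} \bigl[\,t - d(x, \gamma(t))\,\bigr], \qquad b^-(x) = \lim_{t \to -\infty} \bigl[\,|t| - d(\gamma(t), x)\,\bigr],
\]
where $d$ is the Lorentzian distance. Using the reverse triangle inequality and the maximizing property of $\gamma$, each expression is monotone in $t$ and bounded (via global hyperbolicity and the line assumption), so the limits exist and define continuous functions on a neighborhood $U$ of $\gamma$ contained in $I^+(\gamma)\cap I^-(\gamma)$. The reverse triangle inequality further yields $b^+ + b^- \geq 0$ on $U$, with equality along $\gamma$.

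At each $p \in U$ construct an \emph{asymptote} to the future ray: a subsequential limit of maximizing segments from $p$ to $\gamma(t_k)$ as $t_k \to \infty$. Global hyperbolicity ensures the limit exists, while timelike geodesic completeness keeps it a future-complete timelike ray $\sigma_p$ rather than degenerating to a null curve. The distance along $\sigma_p$ provides a smooth upper support function $b_p^+$ for $b^+$, agreeing with $b^+$ at $p$. Applying the Raychaudhuri equation to the congruence generated by $\sigma_p$, using $\mathrm{Ric}(V,V) \geq 0$ along timelike $V$ and the absence of focal points on a maximizing ray, yields $\Box b_p^+ \leq 0$ at $p$, where $\Box$ is the d'Alembertian. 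A symmetric argument produces smooth upper support functions for $b^-$ with the same sign of $\Box$, so both $b^+$ and $b^-$ are superharmonic in the support sense, and so is $b^+ + b^-$. The Eschenburg--Calabi strong minimum principle then forces the nonnegative superharmonic function $b^+ + b^-$, which attains its minimum value $0$ along $\gamma \subset U$, to vanish identically on a neighborhood of $\gamma$. Consequently $-b^-$ is a smooth lower support for $b^+$ matching its smooth upper supports, and a standard Calabi-type argument upgrades $b^+$ to a smooth solution of the eikonal equation $g(\nabla b^+, \nabla b^+) = -1$. The rigidity case of the Raychaudhuri equation (vanishing expansion and shear once the Ricci term is saturated) then forces $\mathrm{Hess}(b^+) = 0$, so $\nabla b^+$ is a parallel unit timelike vector field, and its flow---complete by timelike geodesic completeness---yields the isometry $(M,g) \cong (\mathbb{R} \times \Sigma, -dt^2 + h)$ with $\Sigma = \{b^+ = 0\}$ a smooth totally geodesic Cauchy surface; propagating the construction along $\Sigma$ extends everything globally.

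The principal obstacle lies in producing the timelike asymptotes and the support-sense Laplacian inequality. Because the Lorentzian distance is only upper semicontinuous and limits of maximizing timelike segments can in principle degenerate to null or broken curves, securing genuine timelike asymptotes uses global hyperbolicity and timelike geodesic completeness in an essential way. The attendant maximum principle is also delicate: $\Box$ is hyperbolic rather than elliptic and the Busemann functions are a priori only continuous, so classical PDE theory does not apply, and one must argue instead through the smooth support functions produced by the asymptote construction---the key technical innovation due to Eschenburg, with later refinements by Galloway and Newman.
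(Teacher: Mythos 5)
The paper does not prove Theorem \ref{LST}: it is quoted as a known result, with the proof attributed to \cite{Esch}, \cite{GalLST}, \cite{NewmanLST}, and the paper only \emph{uses} it (in the proof of Theorem \ref{mainsplit}). Your proposal is therefore not an alternative to anything in the paper; it is a sketch of the standard Eschenburg--Galloway--Newman argument from those references, and as an outline it is faithful to that argument: Busemann functions $b^{\pm}$, the inequality $b^++b^-\ge 0$ with equality on the line, timelike asymptotes furnishing smooth upper support functions, a Calabi-type strong minimum principle forcing $b^+=-b^-$ near $\gamma$, the resulting regularity and eikonal equation, the parallel gradient, and the flow-out to a product. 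Be aware, though, that each of the steps you name in a sentence occupies a substantial portion of the original papers: (i) producing \emph{timelike} asymptotes and co-rays (rather than limits degenerating to null or broken causal curves) is where global hyperbolicity and timelike completeness do real work; (ii) the maximum principle cannot be run for the hyperbolic operator $\Box$ directly --- the actual argument passes to an elliptic mean-curvature comparison for the level sets (equivalently, a maximum principle for $C^0$ spacelike hypersurfaces satisfying mean curvature inequalities in the support sense), and the support functions must come with uniform two-sided Hessian control, not just a one-point Laplacian inequality; and (iii) the final globalization from a neighborhood of $\gamma$ to all of $M$ is a separate open-and-closed argument using completeness. So your proposal is a correct road map of the literature proof rather than a self-contained proof, which is the appropriate level of detail given that the paper itself treats the theorem as a citation.
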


Despite the resolution of Theorem \ref{LST}, however, the result did not ultimately settle the original motivating rigidity question. A concrete formulation of this was posed by Bartnik in 1988 as follows:

\begin{conj} [Bartnik Splitting Conjecture] \label{BC} Let $M$ be a spacetime with compact Cauchy surfaces, which satisfies $\mathrm{Ric}(X,X) \ge 0$ for all timelike $X$. If $M$ is timelike geodesically complete, then $M$ splits as in \eqref{splitspacetime}, (with $\S$ compact).
\end{conj}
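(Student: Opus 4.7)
The plan is to reduce the Bartnik Splitting Conjecture (Conjecture \ref{BC}) to the Lorentzian Splitting Theorem (Theorem \ref{LST}) by producing a complete timelike line in $M$. Since $M$ has compact Cauchy surfaces it is automatically globally hyperbolic, and the timelike convergence condition together with timelike geodesic completeness are already among the hypotheses; hence the existence of a single timelike line would immediately yield the isometric splitting \eqref{splitspacetime}, with compactness of the factor $\S$ inherited from compactness of the Cauchy surfaces.

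To attempt the construction of such a line, I would fix a Cauchy surface $\S_0$ and sequences $p_n \in I^-(\S_0)$, $q_n \in I^+(\S_0)$ with Lorentzian distance $\tau(p_n,q_n) \to \infty$. By global hyperbolicity each pair is joined by a maximal timelike geodesic segment $\g_n$, and by compactness of $\S_0$ the intersection points $x_n = \g_n \cap \S_0$ converge, after passing to a subsequence, to some $x_\infty \in \S_0$, with tangent directions also converging. The limit curve machinery of global Lorentzian geometry, combined with timelike geodesic completeness, should then produce an inextendible, complete, future-directed timelike geodesic $\g$ through $x_\infty$ as a limit of the $\g_n$ --- the natural candidate for a line.

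The heart of the difficulty, and the reason the conjecture remains open in this generality, lies in the last step: one must verify that $\g$ itself is maximizing on every compact subsegment. Compactness of Cauchy surfaces and timelike completeness do not, in themselves, rule out accumulation of timelike cut points along arbitrarily long approximating segments; in the Riemannian Cheeger--Gromoll setting the analogous step relies on the existence of two ends, which has no direct analog in the compact-Cauchy spacetime setting. What one would need is an additional mechanism --- a uniform focal-distance bound, a symmetry that shifts maximizers in time, or a rigidity statement controlling the cut locus --- to ensure that the maximizing property passes to the limit. Producing such a mechanism from the bare hypotheses of Conjecture \ref{BC} is the central obstacle. This is also where an extra structural assumption, such as the timelike conformal Killing field treated in this paper, seems decisive: its flow supplies a canonical (conformally) time-shifting symmetry that, combined with the Ricci-flat hypothesis, should allow one to control the cut structure, produce a line, and as a byproduct force the conformal Killing field to be Killing.
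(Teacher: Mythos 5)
The statement you were asked about is a \emph{conjecture}, not a theorem of the paper: the authors explicitly state that Conjecture \ref{BC} remains open, even in the vacuum setting, and the paper contains no proof of it. So there is no ``paper's own proof'' to compare against, and your proposal --- which you candidly present as an incomplete strategy rather than a proof --- does not close the gap either. To be concrete about where it stops: the reduction to Theorem \ref{LST} is correct and standard (compact Cauchy surfaces give global hyperbolicity, so a single timelike line suffices), and the construction of a limit geodesic $\g$ through an accumulation point on $\S_0$ via maximizing segments $\g_n$ is also standard. The genuine gap is exactly the one you name: nothing in the bare hypotheses forces the limit geodesic to be maximizing on compact subsegments, because the Lorentzian distance function on a globally hyperbolic spacetime with compact Cauchy surfaces can fail to be finite or continuous in the ways needed, and there is no analog of the two-ends/ray construction from the Riemannian Cheeger--Gromoll argument. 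Identifying the obstacle is not the same as overcoming it, so the proposal does not establish the statement.

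For comparison, the paper's actual contribution (Theorem \ref{mainsplit}) shows how the extra structure you allude to is used. With $\mathrm{Ric}=0$ and a timelike conformal Killing field $X$, Lemma \ref{CKVFEinstein} forces $\mathrm{Hess}_g(\sigma)=0$, so $\nabla\sigma$ is parallel; either $\nabla\sigma$ is everywhere timelike, in which case its integral curves are shown directly to be timelike lines by a calibration-type length estimate (and this case is then ruled out by Lemma \ref{conformalconstantofmotion} applied to a closed spacelike geodesic in a level set), or $\sigma$ is constant, and a constant-of-motion argument along a complete timelike geodesic forces $\sigma\equiv 0$, i.e.\ $X$ is Killing. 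In the Killing case the flow of $X$ supplies precisely the ``time-shifting symmetry'' you anticipated: a compactness argument on the ``commute time'' function $C:S\times S\to[0,\infty)$ shows that a Cauchy surface $S$ lies in $I^-(\g(s_0))$ for a point on an $S$-ray $\g$, and Theorem A of \cite{EschGal} then yields a timelike line, after which Theorem \ref{LST} applies. Your high-level intuition about what the symmetry buys is accurate, but the conjecture itself remains unproved.
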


The conjecture has been established under various auxiliary conditions; see for example \cite{Gal84}, \cite{Bart88}, \cite{EschGal}, \cite{GalBanach}, \cite{horo1}, \cite{horo2}. It was proven under the stronger sectional curvature condition in \cite{EhrGal}, using \cite{Harris}.

Conjecture \ref{BC} is most simply illustrated by the special case of a warped product $(M^{n+1},g) = (I \times \S^n, -d\tau^2 + \phi^2(\tau)\widetilde{h})$, where $(\S^n,\widetilde{h})$ is a compact Riemannian manifold, $I \subset \field{R}$ an open interval, and $\phi : I \to (0,\infty)$ a smooth, positive funtion. For such a spacetime, the timelike convergence condition forces $\phi'' \le 0$. But then if $M$ is timelike complete, we must have $I = \field{R}$, which forces $\phi =c$ to be constant. (Then $h:= c^2\tilde{h}$ is the induced metric on $\S$, and hence $M$ splits as above.) 

While the warped product case is trivial, one may ask what happens when this is `weakened' to the assumption of a timelike conformal symmetry, i.e., the existence of a timelike conformal Killing field. By the latter, we mean a timelike vector field $X$ such that $\mathcal{L}_Xg = 2\sigma g$, where $\mathcal{L}$ is the Lie derivative, and $\s : M \to \field{R}$ is smooth. For example, a warped product as above has timelike conformal Killing vector field $X = \phi(\tau)\partial_\tau$, with conformal factor $\sigma = \phi'(\tau)$. In \cite{IsenbergMoncrief}, various results are established showing the `rigidity' imposed by the existence of conformal symmetries on solutions $(M^4, g)$ of the Einstein equations. Theorem 3 in \cite{IsenbergMoncrief}, for example, shows that a vacuum solution with a proper conformal symmetry must be one of a few special types. The proof makes special use of the dimension $3+1$.

Interestingly, Conjecture \ref{BC} remains open even in the vacuum setting, $\textrm{Ric} \equiv 0$. Indeed, we are not aware of any prior results in this direction. The main result established here is the following:

\begin{thm}\label{mainsplit} 
Let $n \ge 2$ and suppose that $(M^{n+1},g)$ is a Ricci flat, timelike geodesically complete spacetime, with compact Cauchy surfaces. If $M$ admits a timelike conformal Killing field $X$, then $M$ splits isometrically as
\beq
(M^{n+1}, g) \approx (\field{R} \times \S^n, -dt^2 + h)\label{splitspacetime2}
\eeq
with the (Riemannian) fiber $(\S^n, h)$ compact and Ricci flat, and $X$ is in fact Killing. 
\end{thm}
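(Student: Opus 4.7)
\medskip
\noindent\textbf{Proof proposal.}

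The plan is to apply the Lorentzian Splitting Theorem (Theorem~\ref{LST}). Since $M$ is Ricci flat, the timelike convergence condition holds trivially; with global hyperbolicity and timelike geodesic completeness given, the remaining task is to exhibit a timelike line. I would first reduce to the Killing case by forcing $\sigma \equiv 0$, and then produce the line using the resulting Killing symmetry.

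The reduction rests on the standard identity $\mathcal{L}_X P = -\hess(\sigma)$ for the Schouten tensor $P = \frac{1}{n-1}\bigl(\ric - \frac{R}{2n}g\bigr)$, obtained by differentiating the conformal transformation law for $P$ along the flow of $X$. In vacuum $P \equiv 0$, so $\hess(\sigma) = 0$ and $\nabla\sigma$ is a parallel vector field on $M$. A case analysis on the causal character of $\nabla\sigma$ then forces $\nabla\sigma \equiv 0$: if $\nabla\sigma$ is timelike nonzero, Wu's Lorentzian de Rham theorem splits $M \cong \field{R}\times \S$ with $\sigma(t,x) = ct + d$ linear in $t$, and decomposing $X = a\partial_t + Y$ on this split and applying Yano's theorem on the compact Ricci-flat fiber $\S$ (any CKF on a compact nonnegative-Ricci Riemannian manifold is Killing, and a Killing field with constant conformal factor has that factor zero) forces $c = d = 0$, contradicting $\nabla\sigma \neq 0$; a spacelike parallel $\nabla\sigma$ would grow linearly in arc length along integral curves, ruling out closed integral curves (which would make $\sigma$ multi-valued) and so yielding a non-compact $\field{R}$-factor in Cauchy surfaces; a null parallel $\nabla\sigma$ would give a Brinkmann/pp-wave structure, incompatible with compact Cauchy plus timelike geodesic completeness (cf.\ incompleteness of the Clifton--Pohl torus). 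Hence $\sigma \equiv c$ is constant, and the case $c \neq 0$ (a proper homothety) is ruled out by the exponential scaling of curvature invariants such as $|\mathrm{Riem}|^2$ along the flow of $X$ versus the compact-Cauchy hypothesis---this forces $M$ flat, but flat compact-Cauchy vacuum spacetimes admit only Killing CKFs. Therefore $\sigma \equiv 0$ and $X$ is Killing.

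With $X$ a timelike Killing field, $|X|$ is constant along its integral curves. Choose $x_0 \in \S_0$ minimizing $|X|$ on a compact Cauchy surface $\S_0$, with $m := |X|(x_0) > 0$, and set $\gamma(t) := \phi_t(x_0)$ for $\phi_t$ the flow of $X$. Then the Lorentzian distance $d(\gamma(-n),\gamma(n)) \geq 2nm \to \infty$, and the maximizing timelike geodesics $\beta_n$ from $\gamma(-n)$ to $\gamma(n)$, crossing $\S_0$ at points $y_n$, can be re-centered using the Killing isometries $\phi_{-t_n}$ for suitable $t_n$, so that their tangent vectors at the crossings remain in a compact subset of the unit timelike bundle over $\S_0$. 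A subsequential limit then yields a timelike line through a limit point $y^* \in \S_0$, and Theorem~\ref{LST} gives the desired isometric split $(M,g) \approx (\field{R}\times \S, -dt^2+h)$, with $\S$ compact (Cauchy hypothesis) and Ricci flat (Gauss equation, since the slices have vanishing second fundamental form). A final application of Yano's theorem to the decomposition $X = a\partial_t + Y$ on the split, using the component-wise CKF equations with $\sigma \equiv 0$, verifies that $a$ is constant and $Y$ is a $t$-independent Killing field on $\S$, confirming $X$ is Killing directly on $M$. The main technical obstacle is the tangent-vector control for the $\beta_n$---a version of the central difficulty in the Bartnik Splitting Conjecture---for which the Killing symmetry (to which the conformal Killing structure has been reduced via the Schouten identity) provides the essential leverage.
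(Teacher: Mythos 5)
Your overall architecture --- kill $\sigma$ first, then extract a timelike line from the resulting Killing symmetry and invoke the Lorentzian splitting theorem --- matches the paper's, and your first step ($\mathrm{Hess}_g(\sigma)=0$ in vacuum, via the conformal transformation law) is exactly the content of the paper's Lemma \ref{CKVFEinstein} with $\lambda=0$. But there are genuine gaps at the points where the real work happens. Your four-way case analysis on the causal character of the parallel field $\nabla\sigma$ is largely hand-waved: the null case is ``ruled out'' by an appeal to Brinkmann structure and the Clifton--Pohl torus (which is neither globally hyperbolic nor relevant), and the spacelike case by a vague non-compactness claim. The paper disposes of both in one line: at a maximum of $\sigma$ on a compact Cauchy surface, $\nabla\sigma$ is normal to the surface, hence timelike or zero, and parallelism propagates this everywhere. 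In the timelike case your Yano-on-the-fiber argument is not actually carried out (in the decomposition $X=a\partial_t+Y$ the field $Y$ is $t$-dependent and the slice-wise equations are not the clean CKF equation you assert); the paper instead takes a closed spacelike geodesic $\g$ in a nonzero level set $\{\sigma=k\}$ (a compact, totally geodesic slice of the split) and derives a contradiction from the conservation law $\frac{d}{ds}g(X,\g')=\sigma\, g(\g',\g')$, which cannot integrate to zero around a loop when $k\neq 0$. Your treatment of the proper homothety case via exponential scaling of $|\mathrm{Riem}|^2$ is also incomplete: the orbits of $X$ leave every compact set, so scaling along them does not by itself contradict compactness of the Cauchy surfaces, and the fallback claim about flat compact-Cauchy vacuum spacetimes is unproved. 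The paper's argument is elementary --- along a complete unit timelike geodesic $\alpha$, $g(X,\alpha')=-cs+d$ would have to vanish somewhere if $c\neq 0$, impossible for two timelike vectors.

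The most serious gap is the construction of the timelike line once $X$ is Killing. Your plan --- re-center the maximizing geodesics $\beta_n$ by the isometries $\phi_{-t_n}$ so that their tangents at the crossings of $\S_0$ stay in a compact part of the unit timelike bundle --- is precisely the step that fails in general (it is the open core of the Bartnik conjecture), and you do not explain how the Killing symmetry delivers the needed compactness: the flow of $X$ does not preserve $\S_0$, and conjugating by $\phi_{-t_n}$ neither returns the crossing points to $\S_0$ nor prevents the tangents from degenerating to the null cone. The paper's route is different and complete: a curve-lifting lemma (Lemma \ref{Killinglifts}) shows every spatial curve in $S$ lifts along the flow of $X$ to a timelike curve; this yields an upper semicontinuous, hence bounded, ``commute time'' $C(a,b)$ on the compact $S\times S$, so that $L_b(\tau)\in I^+(a)$ for all $a,b\in S$ and some fixed $\tau$. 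Any future $S$-ray $\gamma$ must cross the compact slab $[0,\tau]\times S$, whence $S\subset I^-(\gamma(s_0))$, and Theorem A of \cite{EschGal} then produces the timelike line. Without an argument of this kind, or a genuine proof of your compactness claim, your proposal does not close.
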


\section{Preliminary results}

Recall that a smooth vector field $X$ on a semi-Riemannian manifold $(M,g)$ is \emph{Killing} if $g$ is invariant under the flow of $X$, i.e., if $\mathcal{L}_X g = 0$, where $\mathcal{L}$ is the Lie derivative. More generally, by a \emph{conformal Killing field} on $(M,g)$, we mean a smooth vector field $X$, such that $\mathcal{L}_X g = 2\sigma g$, for some smooth function $\sigma : M \to \field{R}$. In the special case that $\sigma$ is a constant, $X$ is called a \emph{homothetic Killing field}.

We begin with the following standard observation. (The notation is suggestive for our applications below, but note that we are not assuming that $X$ is timelike or that $M$ is  Lorentzian.)

\begin{lem} \label{conformalcoords} Let $M = (M^{n+1},g)$ be a semi-Riemannian manifold, and suppose that $X$ is a conformal Killing field, with $\mathcal{L}_X g = 2\sigma g$. Then, in any local coordinates $\{t = x^0, x^1, ..., x^n\}$ with $X = \partial_t$, we have:
\beq
g(t,x^1, ..., x^n) = e^{2f(t,x^1, ... , x^n)} \displaystyle \sum_{i,j = 0}^n G_{ij}(x^1, ..., x^n)dx^i \otimes dx^j \nonumber
\eeq
where $f(t,x^1, ..., x^n) := \int_0^{t} \sigma(s, x^1, ... , x^n) d s$. In particular, note that $\partial_t f = \sigma$. 
\end{lem}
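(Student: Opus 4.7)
My plan is to work directly in the given adapted coordinates $\{t, x^1, \dots, x^n\}$ in which $X = \partial_t$, and translate the conformal Killing equation $\mathcal{L}_X g = 2\sigma g$ into a component-wise ODE in $t$. Recall the standard coordinate formula
\[
(\mathcal{L}_X g)_{ij} = X^k \partial_k g_{ij} + g_{kj}\partial_i X^k + g_{ik}\partial_j X^k,
\]
so that when $X = \partial_t$ (i.e.\ $X^k = \delta^k_0$, with all components constant), the last two terms vanish and the first reduces to $\partial_t g_{ij}$. Hence the conformal Killing condition becomes simply
\[
\partial_t g_{ij}(t, x^1, \dots, x^n) = 2\sigma(t, x^1, \dots, x^n)\, g_{ij}(t, x^1, \dots, x^n)
\]
for every pair of indices $i, j \in \{0, 1, \dots, n\}$.

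Next, I would observe that this is, for each fixed $(x^1, \dots, x^n)$ and each fixed $(i,j)$, a scalar linear first-order ODE in the variable $t$. Integrating yields
\[
g_{ij}(t, x^1, \dots, x^n) = g_{ij}(0, x^1, \dots, x^n)\, \exp\!\left(2\int_0^t \sigma(s, x^1, \dots, x^n)\, ds\right).
\]
Setting $G_{ij}(x^1, \dots, x^n) := g_{ij}(0, x^1, \dots, x^n)$ and $f(t, x^1, \dots, x^n) := \int_0^t \sigma(s, x^1, \dots, x^n)\, ds$ gives the claimed expression $g = e^{2f}\sum_{i,j} G_{ij}\, dx^i \otimes dx^j$. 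The identity $\partial_t f = \sigma$ is then immediate from the fundamental theorem of calculus.

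There is no real obstacle here; the only minor point to be careful about is the identification of the Lie derivative in adapted coordinates, which is what reduces the tensor equation to a decoupled system of scalar ODEs with the variables $x^1, \dots, x^n$ serving as parameters. Everything else is first-year ODE theory, and the result follows in a few lines.
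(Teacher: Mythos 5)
Your proposal is correct and follows essentially the same route as the paper: both reduce the conformal Killing equation in adapted coordinates to the componentwise ODE $\partial_t g_{ij} = 2\sigma g_{ij}$ (you via the coordinate formula for $(\mathcal{L}_X g)_{ij}$, the paper via the Leibniz rule and $[\partial_t,\partial_i]=0$, which is the same computation) and then integrate in $t$ with the $x^i$ as parameters.
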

\begin{proof} Because $\mathcal{L}_X(\partial_i) = [\partial_t, \partial_i] = 0$, we have $2\sigma g_{ij} = (2\sigma g)(\partial_i, \partial_j) = (\mathcal{L}_X g)(\partial_i, \partial_j) = X [g(\partial_i, \partial_j)] = Xg_{ij} = \partial_t g_{ij}$
Hence, for any indices $i, j \in \{0, 1, 2, ..., n\}$, we have $\partial_t g_{ij} = 2 \sigma g_{ij}$. Both sides are functions of $(t = x^0, x^1, ..., x^n)$, but holding $x^i$ constant for all $i \ge 1$, we have a first order linear equation in the single variable $t$. Using the integrating factor $\mu = e^{-2f}$ gives $g_{ij}(t, x^1,..., x^n) = e^{2f(t, x^1,..., x^n)}G_{ij}(x^1, ..., x^n)$.
\end{proof}

We note that for a warped product spacetime metric $g = -d\tau^2 + \phi^2(\tau)\tilde{h}$, the above result holds globally.   For example, letting $t := \int_c^\tau 1/\phi(s)ds$, then $d t = d\tau/\phi(\tau)$, and 
\beq
g = -d\tau^2 + \phi^2(\tau)\tilde{h} = \phi^2(\tau(t))(-dt^2 + \tilde{h}) = e^{2f(t)}(-dt^2 + \tilde{h})\nonumber
\eeq
Indeed, this is precisely the form of the metric induced as in Lemma \ref{conformalcoords} by the conformal Killing field $X = \phi(\tau)\partial_\tau = \partial_t$, with $df/dt = d\phi/d\tau = \sigma$.

\vspace{1pc}
We shall make use of the following:
\begin{lem} \label{conformalconstantofmotion} Let $M$ be a semi-Riemannian manifold, and let $X$ be a conformal Killing field, with $\mathcal{L}_Xg = 2\s g$. Then we have:
\ben
\item [(1)] $g(\nabla_YX,Y) = \sigma g(Y,Y)$, for all smooth vector fields $Y$.
\item [(2)] Let $\g = \g(s)$ be any affinely parameterized geodesic, and set $C := g(\g'(s), \g'(s))$. Then along the geodesic $\g = \g(s)$, we have:
\beq
\frac{d}{ds} g(X,\g') = \s(\g(s)) C\label{conformalconstanteq}
\eeq
\een
\end{lem}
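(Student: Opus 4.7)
The plan is to derive both statements directly from the definition of the Lie derivative of the metric, specifically the standard identity
\[
(\mathcal{L}_X g)(Y,Z) \;=\; g(\nabla_Y X, Z) + g(Y, \nabla_Z X),
\]
which holds for any vector fields $Y, Z$ on a semi-Riemannian manifold equipped with its Levi-Civita connection. This identity follows from expanding $(\mathcal{L}_X g)(Y,Z) = X\,g(Y,Z) - g([X,Y],Z) - g(Y,[X,Z])$, rewriting $[X,Y] = \nabla_X Y - \nabla_Y X$ (similarly for $[X,Z]$), and using metric compatibility to cancel the $\nabla_X$ terms. I will cite this as standard without re-deriving it.

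For part (1), I would set $Z = Y$ in the identity above to get $(\mathcal{L}_X g)(Y,Y) = 2g(\nabla_Y X, Y)$. By hypothesis $\mathcal{L}_X g = 2\sigma g$, so the left side equals $2\sigma g(Y,Y)$, and dividing by $2$ gives the claim. This is a one-line computation with no real obstacle.

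For part (2), I would compute the derivative along $\gamma$ using metric compatibility:
\[
\frac{d}{ds} g(X,\gamma') \;=\; g(\nabla_{\gamma'} X, \gamma') + g(X, \nabla_{\gamma'}\gamma').
\]
Since $\gamma$ is affinely parameterized, $\nabla_{\gamma'}\gamma' = 0$, so the second term drops out. Applying part (1) with $Y = \gamma'$ to the first term yields $\sigma(\gamma(s)) g(\gamma',\gamma') = \sigma(\gamma(s)) C$, which is \eqref{conformalconstanteq}. (The quantity $C$ is indeed constant along $\gamma$ because $\gamma$ is a geodesic, but we do not actually need this fact for the statement: the formula displays $C$ as the constant value of $g(\gamma',\gamma')$.)

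There is no significant obstacle here — the whole proof rests on the standard Lie-derivative-to-covariant-derivative identity and the geodesic equation. The only mild subtlety worth noting is that the formula $g(\nabla_Y X, Y) = \sigma g(Y,Y)$ does \emph{not} hold in the stronger bilinear form $g(\nabla_Y X, Z) = \sigma g(Y,Z)$ unless $X$ is also closed-as-a-one-form (e.g.\ a gradient conformal field); it is crucially the symmetric, $Y=Z$, diagonal version that comes out, which is exactly what is needed for the geodesic computation in (2).
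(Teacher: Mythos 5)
Your proposal is correct and follows essentially the same route as the paper: part (1) from the standard identity $(\mathcal{L}_Xg)(V,W) = g(\nabla_VX,W) + g(\nabla_WX,V)$ specialized to $V=W=Y$, and part (2) by differentiating $g(X,\gamma')$ along the geodesic and dropping the $\nabla_{\gamma'}\gamma'$ term. Your closing remark about the diagonal versus full bilinear form is a sensible observation but not needed for the argument.
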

\begin{proof} (1) follows from the standard formula:
\beq
(\mathcal{L}_Xg)(V,W) = g(\nabla_VX,W) +g(\nabla_WX,V)\nonumber \label{LieNabla}
\eeq
To prove (2), note that for any curve $\g$, we have:
\beq
\g' g(X,\g') = g(\nabla_{\g'}X, \g') + g(X, \nabla_{\g'}\g')\nonumber\label{}
\eeq
If $\g$ is a geodesic, the last term vanishes, and the result follows from (1).
\end{proof}

\vspace{1pc}
The following lemma is the key analytic result needed to prove Theorem \ref{mainsplit}.

\begin{lem} \label{CKVFEinstein} For $n \ge 2$, suppose that $(M^{n+1},g)$ is a semi-Riemannian manifold, with $\mathrm{Ric}_g = \lambda g$, for some real number $\lambda \in \field{R}$, and suppose that $X$ is a nowhere vanishing conformal Killing field, with $\mathcal{L}_X g = 2\sigma g$. Then, with $\Delta_g \sigma = \mathrm{tr} (\mathrm{Hess}_g(\sigma))$, we have:

\vspace{.5pc}\label{hesseqn}
\beq
\mathrm{Hess}_g(\sigma) = - \bigg(\frac{\Delta_g \sigma + 2\lambda \sigma}{n-1}\bigg)g\label{CKVFEinsteineq1}
\eeq

\vspace{.5pc}
\noindent
which after tracing gives:

\beq
\Delta_g \sigma = -\lambda \bigg(\frac{n+1}{n} \bigg)\sigma   \label{CKVFEinsteineq2}
\eeq

\end{lem}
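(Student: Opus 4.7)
The plan is to combine the Einstein condition with the classical identity for the Lie derivative of the Ricci tensor along a conformal Killing field. Once the right identity is in hand, the lemma reduces to linear algebra.

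First, I would establish the identity
\[
\mathcal{L}_X R_{ab} \;=\; -(n-1)\,\nabla_a\nabla_b\sigma - (\Delta_g\sigma)\,g_{ab},
\]
which holds in ambient dimension $m=n+1$. To derive this, compute the Lie derivative of the Christoffel symbols via the standard formula
\[
\mathcal{L}_X\Gamma^a_{bc} \;=\; \tfrac{1}{2}g^{ad}\bigl(\nabla_b(\mathcal{L}_Xg)_{dc} + \nabla_c(\mathcal{L}_Xg)_{bd} - \nabla_d(\mathcal{L}_Xg)_{bc}\bigr),
\]
which, upon substituting $\mathcal{L}_X g = 2\sigma g$, collapses to $\mathcal{L}_X\Gamma^a_{bc} = \delta^a_b\nabla_c\sigma + \delta^a_c\nabla_b\sigma - g_{bc}\nabla^a\sigma$. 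Feeding this into $\mathcal{L}_X R^a{}_{bcd} = \nabla_c(\mathcal{L}_X\Gamma^a_{bd}) - \nabla_d(\mathcal{L}_X\Gamma^a_{bc})$ and contracting $a=c$ produces the displayed formula after a direct tensorial computation; the dimension-dependent coefficient $-(n-1) = 2-m$ comes from the trace $\delta^a_a = m$.

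Next, I would apply the Einstein condition. Since $\mathrm{Ric}_g = \lambda g$, we also have
\[
\mathcal{L}_X R_{ab} \;=\; \lambda\,\mathcal{L}_X g_{ab} \;=\; 2\lambda\sigma\,g_{ab}.
\]
Equating the two expressions for $\mathcal{L}_X R_{ab}$ and solving for the Hessian immediately yields
\[
\mathrm{Hess}_g(\sigma) \;=\; -\,\frac{\Delta_g\sigma + 2\lambda\sigma}{n-1}\,g,
\]
which is \eqref{CKVFEinsteineq1}. Tracing this equation against $g^{ab}$ in dimension $n+1$ gives the linear relation $(n-1)\Delta_g\sigma = -(n+1)(\Delta_g\sigma + 2\lambda\sigma)$, and isolating $\Delta_g\sigma$ produces $\Delta_g\sigma = -\lambda\frac{n+1}{n}\sigma$, which is \eqref{CKVFEinsteineq2}.

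I do not anticipate a genuine obstacle here. The only nontrivial step is the Lie-derivative formula for $R_{ab}$, which is classical (cf.\ Yano's monograph on Lie derivatives), and after that everything is algebraic manipulation. I note, somewhat peripherally, that the hypothesis that $X$ is nowhere vanishing does not appear to be used in the proof of the identities themselves; presumably it is needed later when this lemma is combined with the splitting argument.
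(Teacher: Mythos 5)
Your proof is correct, but it follows a genuinely different route from the paper's. The paper works in local coordinates adapted to $X$ (so $X=\partial_t$ and, by its Lemma \ref{conformalcoords}, $g=e^{2f}G$ with $G$ independent of $t$), applies the conformal-change formula for the Ricci tensor from Besse, and then Lie-differentiates the resulting identity term by term before translating back to $g$. You instead derive the covariant identity $\mathcal{L}_X \mathrm{Ric} = -(n-1)\,\mathrm{Hess}_g(\sigma) - (\Delta_g\sigma)\,g$ directly from $\mathcal{L}_X\Gamma^a_{bc}$ and equate it with $\mathcal{L}_X(\lambda g)=2\lambda\sigma g$; your computation of $\mathcal{L}_X\Gamma$, the contraction giving the coefficient $-(m-2)=-(n-1)$, and the final trace identity $(n-1)\Delta_g\sigma=-(n+1)(\Delta_g\sigma+2\lambda\sigma)$ all check out. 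This is essentially the Yano-style argument that the paper only alludes to (via its citation of \cite{IsenbergMoncrief} and \cite{Yano} for the case $\lambda=0$) rather than carries out. Your version is coordinate-free and shorter, and---as you correctly observe---it does not use the hypothesis that $X$ is nowhere vanishing, whereas the paper's computation genuinely needs that hypothesis in order to straighten $X$ to $\partial_t$ in a chart. What the paper's approach buys is self-containedness relative to a single standard reference (Besse's conformal-change formula) and coordinates that are reused elsewhere in the paper; what yours buys is brevity, generality (no nonvanishing assumption), and avoidance of the back-and-forth conversion between $G$-quantities and $g$-quantities.
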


\smallskip
\begin{proof} For the convenience of the reader we provide an outline of the proof, which is a lengthy computation. (See also the proof of Theorem 3 in \cite{IsenbergMoncrief}, which treats $\lambda = 0$, citing \cite{Yano} for the relevant formula in this case.) 

Fix local coordinates $\{t = x^0, x^1, ..., x^n\}$, and $f$ and $G$, as in Lemma \ref{conformalcoords}, with $X = \partial_t$. Hence, in the neighborhood $U$ covered by the chart, $g$ is conformal to a metric whose components are independent of $t$, that is, on $U$ we have $g = e^{2f}G$, with $G = G(x^1,...,x^n)$. Using the formula for Ricci under conformal change in \cite{Besse}, we have:
\beq
\mathrm{Ric}_g = \mathrm{Ric}_G - (n-1) \bigg( H_G^f - df \otimes df \bigg) - \bigg( \Delta_G f + (n-1)||df||_G^2 \bigg) G \label{Besse}
\eeq
where $||df||^2_G = G(\nabla_Gf, \nabla_Gf)$, $H_G^f = \textrm{Hess}_G(f)$ and $\Delta_G f = \mathrm{tr} (H_G^f)$. (Note: The sign convention in \cite{Besse} is $\Delta_G f := - \mathrm{tr} (H_G^f)$.) Applying the Einstein condition we obtain:
\beq
\bigg(\frac{\lambda }{n-1}\bigg)g  = \frac{\mathrm{Ric}_G}{n-1} - H_G^f + df \otimes df - ||df||_G^2 G - \bigg(\frac{\Delta_G f}{n-1}\bigg)G \label{vcR}  \,.
\eeq

We now take the Lie derivative of \eqref{vcR} with respect to $X = \partial_t$. First note that $\mathcal{L}_X(\mathrm{Ric}_G) = \mathcal{L}_{\partial t} (\mathrm{Ric}_G)$ vanishes, since the coefficients 
$(\mathrm{Ric}_G)_{ij}$, which depend only on $G_{ij}$ and its derivatives, are independent of $t$.
Thus, taking the Lie derivative of \eqref{vcR} gives:
\beq
\bigg(\dfrac{2\lambda \sigma}{n-1}\bigg)g = - \mathcal{L}_{\partial_t} H_G^f  + \mathcal{L}_{\partial_t} (df \otimes df) - \mathcal{L}_{\partial_t} \bigg(||df||_G^2 G \bigg) - \mathcal{L}_{\partial_t} \bigg( \frac{\Delta_G f}{n-1}G \bigg) \label{vcR2}
\eeq
One may now proceed to compute the four Lie derivatives in \eqref{vcR2}, using, where appropriate, the fact that the  $G_{ij}$'s, and quantities defined in terms of the $G_{ij}$'s, have vanishing $t$-derivative, and $\partial_t f = \s$.  One obtains,

\beq
\mathcal{L}_{\partial_t} H_G^f  =  H_G^\sigma \,, \quad  \mathcal{L}_{\partial_t} [(\Delta_G f )G] 
= (\Delta_G \sigma) G \nonumber
\eeq
\beq
\mathcal{L}_{\partial_t} (df \otimes df) = d\sigma \otimes df + df \otimes d\sigma \, , \quad   \partial_t ||df||_G^2 =
2G(\nabla_G \sigma, \nabla_G f)  \,.  \nonumber
\eeq
Substituting these into \eqref{vcR2} gives:
\beq
\bigg(\dfrac{2\lambda \sigma}{n-1}\bigg)g = - H_G^\sigma + d\sigma \otimes df + df \otimes d\sigma - 2G(\nabla_G \sigma, \nabla_G f)G - \bigg( \frac{\Delta_G \sigma}{n-1} \bigg) G \label{vcR3}
\eeq
We now translate all the $G$-terms in \eqref{vcR3} back to the metric $g$. First note that:
\beq
G(\nabla_G \sigma, \nabla_G f)G  = g(\nabla_g \sigma, \nabla_g f)g  \nonumber 
\eeq
By standard formulas, we have for the Hessian and Laplacian,
\begin{align*}
H_G^\sigma &= H_g^\sigma + d\sigma \otimes df + df \otimes d\sigma - g(\nabla_g \sigma, \nabla_g f) g  \\
 \Delta_G \sigma & =   e^{2f} \bigg( \Delta_g \sigma - (n-1) g(\nabla_g \sigma, \nabla_g f) \bigg) \,
\end{align*}
By plugging these pieces into \eqref{vcR3}, after some simple manipulations we arrive at \eqref{CKVFEinsteineq1} and \eqref{CKVFEinsteineq2}.
\end{proof}

\vspace{1pc}
The proof of Theorem \ref{mainsplit} eventually reduces to the static case. We will then make use of the following curve lifting result.

\begin{lem} \label{Killinglifts} Let $M$ be a globally hyperbolic spacetime, with smooth spacelike Cauchy surface $S$. If $M$ admits a complete timelike Killing field $X$, then every spatial curve in $S$ lifts (along the integral curves of $X$) to a timelike curve in $M$. 
\end{lem}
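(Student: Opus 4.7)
The plan is to construct the lift explicitly, using the flow of $X$ to add enough timelike motion to overcome the spacelike motion of $\alpha$. Let $\phi_t$ denote the flow of $X$, which is defined on all of $\field{R} \times M$ by completeness, and given a smooth auxiliary function $\tau\colon I \to \field{R}$ on the parameter interval $I$ of $\alpha$, set
\[
\gamma(s) := \phi_{\tau(s)}(\alpha(s)).
\]
By construction $\gamma(s)$ lies on the integral curve of $X$ through $\alpha(s) \in S$, so $\gamma$ lifts $\alpha$ along integral curves of $X$; it remains to choose $\tau$ so that $\gamma$ is timelike.

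First I would differentiate,
\[
\gamma'(s) \,=\, \tau'(s)\, X|_{\gamma(s)} + Y(s), \qquad Y(s) := (d\phi_{\tau(s)})_{\alpha(s)}\bigl(\alpha'(s)\bigr).
\]
Because $X$ is Killing, each $\phi_t$ is an isometry with $(d\phi_t)_{*} X = X$, so the inner products of $X|_{\gamma(s)}$ and $Y(s)$ at $\gamma(s)$ agree with the corresponding inner products of $X|_{\alpha(s)}$ and $\alpha'(s)$ at $\alpha(s)$. Writing $A(s) := g(X,X)|_{\alpha(s)}$, $B(s) := g(X, \alpha')|_{\alpha(s)}$, $C(s) := g(\alpha', \alpha')|_{\alpha(s)}$, this gives
\[
g(\gamma', \gamma')(s) \,=\, A(s)\, \tau'(s)^2 + 2 B(s)\, \tau'(s) + C(s),
\]
a quadratic in $\tau'(s)$ with smooth coefficients depending only on data along $\alpha$.

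Now $A(s) < 0$ since $X$ is timelike and $C(s) \geq 0$ since $\alpha'(s)$ is tangent to the spacelike Cauchy surface $S$, so the quadratic opens downward in $\tau'$ and is negative whenever $|\tau'(s)|$ is sufficiently large. If $\alpha$ is parameterized on a compact subinterval, continuity of $A, B, C$ yields a single constant $K > 0$ with $A(s) K^2 + 2 B(s) K + C(s) < 0$ for all $s$, and the choice $\tau(s) := K s$ produces the desired timelike lift; for noncompact $I$ one instead lets $\tau'(s)$ grow pointwise with $s$ by a smooth selection.

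The main (and essentially only) obstacle is arranging $|\tau'|$ uniformly, which is handled by continuity and compactness and carries no deeper geometric content. The crucial ingredients beyond that are the completeness of $X$ (so $\phi_{\tau(s)}$ is defined globally), the Killing property (which reduces the inner products appearing along $\gamma$ to inner products along $\alpha$, giving the clean quadratic in $\tau'$), and the spacelike character of $S$ (ensuring $C \geq 0$, so the sign of the quadratic is controlled by its leading term for large $|\tau'|$).
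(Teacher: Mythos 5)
Your proof is correct and is essentially the paper's argument in invariant form: your quadratic $A\tau'^2 + 2B\tau' + C$ with $A = g(X,X) < 0$ dominating for large $\tau'$ is exactly the paper's computation $G_{00} + 2sG_{0i}\beta_i' + s^2G_{ij}\beta_i'\beta_j' < 0$ for small $s$, under the substitution $s = 1/\tau'$. The only difference is that you use the isometry property of the flow to reduce all inner products to data along $\alpha$ in $S$ directly, which neatly sidesteps the paper's covering of the curve by finitely many coordinate patches in which the $G_{ij}$ are bounded and $-G_{00}$ is bounded below.
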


\begin{proof} Because $X$ is complete, we have a diffeomorphic splitting $M \approx \field{R} \times S$, given by flowing along the integral curves of $X$. By reparameterizing if necessary, we may suppose that each integral curve $\g(t)$ of $X$ meets $S$ at $t = 0$. 

We will now prepare a convenient collection of coordinate patches on $M^{n+1}$. First note that, choosing any local coordinates $\{x^1, ..., x^n\}$ on $S$, then $\{t = x^0, x^1, ..., x^n\}$ give local coordinates on $M$, and by Lemma \ref{conformalcoords} we have:
\beq
g(t,x^1, ..., x^n) = \displaystyle \sum_{i,j = 0}^n G_{ij}(x^1, ..., x^n)dx^i \otimes dx^j \label{Killingmetric}
\eeq
For $p \in S$, let $U_p$ be a neighborhood of $p$ in $S$, with local coordinates $\{x^1, ..., x^n\}$. Let $V_p$ be a smaller neighborhood, with $p \in V_p \subset \subset U_p$. Hence, $\{t = x^0, x^1, ..., x^n\}$ give local coordinates on $\field{R} \times V_p$, on which $g$ has coordinate representation \eqref{Killingmetric}. Because $g$ is continuous, and $V_p \subset \subset U_p$, and because the component functions $G_{ij}$ are independent of $t$, the $G_{ij}$ are bounded on $\field{R} \times V_p$. Moreover, because $G_{00}$ is negative on $\field{R} \times V_p$, we have 
\beq
m_p := \min \{-G_{00}(z) : z \in \field{R} \times V_p \} > 0
\eeq

Now fix any spatial curve $\b : [0,\ell] \to S$. Since the image of $\b$ is compact, we can find finitely many points $\{p_1, ..., p_N\}$ such that $\textrm{Im}(\b) \subset (V_{p_1} \cup \cdots \cup V_{p_N})$. Hence, $\b$ breaks into finitely many subsegments, with each contained in a single patch. Provided that we are able to lift to any desired `initial height' or `starting time', it thus suffices to assume that $\b$ lies in a single chart as above, say, $\textrm{Im}(\beta) \subset V_p$. Hence, we have $\beta(u) = (\beta_1(u), ..., \beta_n(u))$, for $u \in [0,\ell]$. For $0< s \le1$, consider $\beta_s(u) = \beta(su)$, for $u \in [0, \ell/s]$. Consider the simple lift up to the starting time $t = t_0$, given by $\alpha_s(t) = (t + t_0, \beta(st))$. Then, for $t \in [0, \ell/s]$, 
\begin{eqnarray}
g(\alpha'_s(t), \alpha'_s(t)) & =  & G_{00}(\b(st)) + 2sG_{0i}(\b(st))\beta'_i(st) + s^2G_{ij}(\b(st))\beta'_i(st) \beta'_j(st) \nonumber \\[.5pc]
  & \le & \; \; \; \;  -m_p \; \; \; \; \; + 2sG_{0i}(\b(st))\beta'_i(st) + s^2G_{ij}(\b(st))\beta'_i(st) \beta'_j(st)\nonumber
\end{eqnarray}
\noindent
where we sum over all repeated indices, with $i,j \in \{1, ..., n\}$. Since $- m_p$ is strictly negative and everything else is bounded, we can find an $s$ small enough so that this last quantity is negative, and hence so that $\alpha_s(t)$ is timelike, for all $t \in [0, \ell/s]$. \end{proof}

\section{Proof of the splitting result}

\proof[Proof of Theorem \ref{mainsplit}]
Applying Lemma \ref{CKVFEinstein} with $\lambda = 0$, we see that  $\nabla \sigma$ is parallel. Fix any smooth, spacelike Cauchy surface, $S$. By compactness, $\sigma|_S$ attains a maximum at some $p \in S$, and thus $(\nabla \sigma)_p$ is normal to $S$. Thus, either $(\nabla \sigma)_p$ is timelike or zero. But since $\nabla \sigma$ is parallel, then either $\nabla \sigma$ is everywhere timelike, or it vanishes identically. That is, either $\nabla \sigma$ is an everywhere timelike vector field, or $\sigma$ is constant. The proof will show that, in fact, $\s$ must be zero, but we proceed by considering each case below.

Case 1: Suppose first that $\nabla \sigma$ is everywhere timelike. By reversing the time-orientation of $M$ if necessary, we may suppose that $\nabla \sigma$ is future-pointing. Since $\nabla \sigma$ is parallel, its integral curves are timelike geodesics, and hence complete by assumption. 
Moreover, the quantity $g(\D\s,\D\s)$ is constant, which by a rescaling, can be taken to be $-1$. The condition $g(\nabla \s, \nabla \s) = -1$ then forces the integral curves of $\nabla \s$ to be maximal (and hence to be timelike lines). To see this, fix one such integral curve, $\g$. Since $\nabla \s$ is future-pointing, $\g$ is a future-directed, unit-speed timelike geodesic. Without loss of generality, suppose that $\a : [a,b] \to M$ is another future timelike curve from $\a(a) = \g(0)$ to $\a(b) = \g(\ell)$. Since $-1 = g(\nabla \s, \nabla \s) = g(\nabla \s, \g') = (\s \circ \g)'$, note that along $\g$ we have $\s(\g(u)) = -u + \s(\g(0))$. Also, because $\nabla \s$ is future-pointing, we have $g(\a', \nabla \s) < 0$.
Define the `spacelike part' of $\a'$ by $N := \a' + g(\a',\nabla \s)\nabla \s$. It follows that $N$ is a vector field on $\a$ with $g(N,N) \ge 0$ and $g(N, \nabla \s) = 0$, and $\a' = -g(\a', \nabla \s)\nabla \s + N$. Then $|\a'| = [-g(\a', \a')]^{1/2} = [g(\a', \nabla \s)^2 - g(N,N)]^{1/2} \le -g(\a', \nabla \s) = - \a'(\s) = - \frac{d}{ds}(\s (\a(s))$. Integrating this gives $L(\a) = \int_a^b|\a'|ds \le \s(\a(a))-\s(\a(b)) = \s(\g(0)) - \s(\g(\ell)) = \ell = L(\g|_{[0,\ell]})$. This shows that the (arbitrary) subsegment $\g|_{[0,\ell]}$ is maximal, and thus $\g$ is a timelike line. (A local version of this basic maximality argument appears, for example, in Proposition 34 in Chapter 5 of \cite{ON}.)

It now follows that $M$ splits as a product, as in \eqref{splitspacetime}, with compact, totally geodesic spacelike slices $\{t\}\times \S$. Since $\nabla \s$ is parallel, it follows from a standard maximum principle argument that each level set of $\s$ must coincide with a slice in the splitting. Fix any nonzero level set $\{\s = k\}$, $k \ne 0$. Since $\{\s = k\}$ is a slice in the product, it is totally geodesic and compact, and hence admits a closed spacelike geodesic, $\g$. But then \eqref{conformalconstanteq} in Lemma \ref{conformalconstantofmotion} leads to a contradiction as we traverse a full circuit of $\g$.

Case 2: We have shown that $\sigma$ must in fact be constant, i.e., that $X$ is in fact a homothetic Killing field, $\mathcal{L}_X g = 2cg$, for some constant $c$. We now claim that $c = 0$ and $X$ is Killing. To see this, let $\alpha : \field{R} \to M$ be a complete unit speed timelike geodesic.  Then we have $-c  =  cg(\alpha', \alpha') = g(\nabla_{\alpha'}X,\alpha') = \alpha'(g(X,\alpha'))$. This implies $g(X,\alpha'(s)) = -cs+d$. If $c \not = 0$, then $g(X,\a'(d/c)) = 0$ gives a contradiction, since both $\a'$ and $X$ are timelike.

Hence $X$ is Killing. The assumption of timelike completeness implies, in fact,  that $X$ is a complete Killing vector field; cf. \cite[Lemma 1]{GarHar}.
Without loss of generality, we may assume that $X$ is future-pointing. For $a \in S$, we can think of the integral curve $L_a$ of $X$ as the `spatial location' corresponding to $a \in S$, flowing in time. Since $X$ is complete and Killing, it follows from Lemma \ref{Killinglifts} that any spatial curve in $S$ lifts, along the integral curves of $X$, to a timelike curve in $M$. It follows that for each $a, b \in S$, there is some finite time $t \in (0,\infty)$ for which $L_b(t) \in I^+(a)$. For $a, b \in S$, define the shortest such `commute time' from $a$ to $b$, (really from $a$ to $L_b$) by:
\beq
C_a(b) := \inf \{ t : L_b(t) \in I^+(a)\} \label{commute}
\eeq
Letting $C(a,b) = C_a(b)$, we claim that $C : S \times S \to [0, \infty)$ is upper-semicontinuous. Suppose so for the moment. Since $S$ is compact, it then follows that $C$ is bounded, that is, there is a `maximum commute time' $\tau$, such that $L_b(\tau) \in I^+(a)$, for all $a,b \in S$. Let $\g : [0, \infty) \to M$ be any future timelike unit-speed $S$-ray. (Hence, $\g$ is future-inextendible, with $d(S, \g(s)) = s$ for all $s \ge 0$.) Note that the integral curves of $X$ give a diffeomorphic splitting $M \approx \field{R} \times S$. Since the slab $[0, \tau] \times S$ is compact, $\g$ must meet the slice $\{\tau\} \times S$, at some point $\g(s_0) = (\tau, a_0) = L_{a_0}(\tau)$. It follows that $S \subset I^-(\g(s_0)) \subset I^-(\g)$. But then by Theorem A in \cite{EschGal}, $M$ contains a timelike line.  Thus, again, by the Lorentzian splitting theorem, $M$ splits. 

It remains to show that `commute time' $C : S \times S \to [0, \infty)$ is upper semicontinous. Fix $a, b \in S$, and $\e >0$. Let $t_0 := C_a(b)$, and set $t_1 := t_0 + \e/2$. Hence, $L_b(t_1) \in I^+(a)$. Letting $\pi_S : M \to S$ be the standard projection, $\pi_S(L_x(t)) = x$,  define $U_a := \pi_S(I^-(L_b(t_1)))$. Hence, $U_a$ is an open neighborhood of $a$ in $S$, such that $L_b(t_1) \in I^+(x)$, for all $x \in U_a$. Letting $t_2 := t_0 + \e$ and $S_{t_2} := \{t_2\} \times S$, define $V_b := \pi_S(I^+(L_b(t_1)) \cap S_{t_2})$. Hence, $V_b$ is a neighborhood of $b$ in $S$ such that $L_y(t_2) \in I^+(L_b(t_1))$, for all $y \in V_b$. But then, for all $(x,y) \in U_a \times V_b \subset S \times S$, we have $L_y(t_2) \in I^+(L_b(t_1)) \subset I^+(x)$, i.e., $L_y(t_0+\e) \in I^+(x)$. In other words, for all $(x,y) \in U_a \times V_b$, we have $C_x(y) \le t_0 + \e = C_a(b) + \e$. 

We have now shown that $M$ splits as in \eqref{splitspacetime2}, for some smooth, compact spacelike Cauchy hypersurface $\S^n$. That this, with its induced Riemannian metric $h$, has $\textrm{Ric}_h = 0$ follows, for example, from the warped product curvature formulas in \cite{ON}, with $\textrm{Ric}_g = 0$ and $f = 1$.\qed

\bigskip
\noindent
\textsc{Acknowledgements:}  GJG's research was supported in part by
NSF grants DMS-1313724 and DMS-1710808.

\bibliographystyle{amsplain}
\bibliography{ckvf}  

\providecommand{\bysame}{\leavevmode\hbox to3em{\hrulefill}\thinspace}
\providecommand{\MR}{\relax\ifhmode\unskip\space\fi MR }
\providecommand{\MRhref}[2]{%
  \href{http://www.ams.org/mathscinet-getitem?mr=#1}{#2}
}
\providecommand{\href}[2]{#2}
\begin{thebibliography}{10}

\bibitem{Bart88}
R.~Bartnik, \emph{Remarks on cosmological spacetimes and constant mean
  curvature surfaces}, Comm. Math. Phys. \textbf{117} (1988), no.~4, 615--624.

\bibitem{Besse}
A.~Besse, \emph{Einstein manifolds}, Classics in Mathematics, Springer, Berlin,
  1987.

\bibitem{IsenbergMoncrief}
D.~Eardley, J.~Isenberg, J.~Marsden, and V.~Moncrief, \emph{Homothetic and
  conformal symmetries of solutions to {E}instein's equations}, Comm. Math.
  Phys. \textbf{106} (1986), no.~1, 137--158.

\bibitem{EhrGal}
P.~E. Ehrlich and G.~J. Galloway, \emph{Timelike lines}, Classical and Quantum
  Gravity \textbf{7} (1990), no.~3, 297.

\bibitem{Esch}
J.-H. Eschenburg, \emph{The splitting theorem for space-times with strong
  energy condition}, J. Differential Geom. \textbf{27} (1988), no.~3, 477--491.

\bibitem{EschGal}
J.-H. Eschenburg and G.~J. Galloway, \emph{Lines in space-times}, Comm. Math.
  Phys. \textbf{148} (1992), no.~1, 209--216.

\bibitem{Gal84}
G.~J. Galloway, \emph{Splitting theorems for spatially closed space-times},
  Comm. Math. Phys. \textbf{96} (1984), no.~4, 423--429.

\bibitem{GalLST}
\bysame, \emph{The {L}orentzian splitting theorem without the completeness
  assumption}, J. Differential Geom. \textbf{29} (1989).

\bibitem{GalBanach}
\bysame, \emph{Some rigidity results for spatially closed spacetimes},
  Mathematics of gravitation, {P}art {I} ({W}arsaw, 1996), Banach Center Publ.,
  vol.~41, Polish Acad. Sci., Warsaw, 1997, pp.~21--34.

\bibitem{horo1}
G.~J. Galloway and C.~Vega, \emph{Achronal limits, {L}orentzian spheres, and
  splitting}, Ann. Henri Poincar\'e \textbf{15} (2014), no.~11, 2241--2279.

\bibitem{horo2}
\bysame, \emph{Hausdorff closed limits and rigidity in {L}orentzian geometry},
  Ann. Henri Poincar\'e \textbf{18} (2017), no.~10, 3399--3426.

\bibitem{GarHar}
D.~Garfinkle and S.~G. Harris, \emph{Ricci fall-off in static and stationary,
  globally hyperbolic, non-singular spacetimes}, Classical Quantum Gravity
  \textbf{14} (1997), no.~1, 139--151.

\bibitem{Harris}
S.~G. Harris, \emph{On maximal geodesic-diameter and causality in {L}orentz
  manifolds}, Mathematische Annalen \textbf{261} (1982), no.~3, 307--313.

\bibitem{NewmanLST}
R.~P. A.~C. Newman, \emph{A proof of the splitting conjecture of {S}.-{T}.
  {Y}au}, J. Differential Geom. \textbf{31} (1990).

\bibitem{ON}
B.~O'Neill, \emph{Semi-{R}iemannian geometry}, Pure and Applied Mathematics,
  vol. 103, Academic Press Inc. [Harcourt Brace Jovanovich Publishers], New
  York, 1983.

\bibitem{Yano}
K.~Yano, \emph{The theory of {L}ie derivatives and its applications},
  Bibliotheca mathematica v. 3, North-Holland Pub. Co., Amsterdam, New York,
  1957.

\bibitem{Yau}
S.-T. Yau, \emph{Problem section}, Annals of Math. Studies, No. 102, Princeton
  University Press, Princeton, N. J., 1982, pp.~669--706.

\end{thebibliography}

\end{document}